\newcommand{\mR}{{\mathbb R}}
\newcommand{\E}{{\mathbb E}}
\newcommand{\cP}{{\mathcal P}}
\newcommand{\R}{\mathbb{R}}
\newcommand{\D}{{\mathbb D}}
\newcommand{\visc}{{}^A\otJac}
\newcommand{\tsp}{\mathbf{T}}
\newcommand{\De}{\mathrm{d}}
\newcommand{\covdev}{\nabla^{W_2} }
\newcommand{\proj}{\mathrm{P}}
\newcommand{\otJac}{\mathrm{D}^{W_2}}
\newcommand{\otgrad}{\nabla^{W_2}}
\newcommand{\scrU}{\mathscr{U}}
\definecolor{llgrey}{rgb}{0.9,0.9,0.9}
\definecolor{lgrey}{rgb}{0.6,0.6,0.6}
\definecolor{lred}{rgb}{0.9,0.7,0.7}
\newtheorem{theorem}{Theorem}
\newtheorem{cor}{Corollary}
\newtheorem{lemma}{Lemma}
\begin{document}

\title{Extremal flows on Wasserstein space
}
\author{Giovanni Conforti}\affiliation{Centre de Math\'ematiques Appliqu\'ees, \'Ecole Polytechnique, Palaiseau, France}\email{giovanni.conforti@polytechnique.edu}
\homepage{https://sites.google.com/site/giovanniconfort/}

\author{Michele Pavon}\affiliation{Dipartimento di Matematica ``Tullio Levi-Civita", Universit\`a di Padova, via
Trieste 63, 35121 Padova, Italy}
\email{pavon@math.unipd.it}\homepage{http://www.math.unipd.it/~pavon}

\begin{abstract}
{\bf Abstract.} We develop an intrinsic geometric approach to calculus of variations on Wasserstein space. We show that the flows associated to the Schr\"{o}dinger bridge with general prior, to Optimal Mass Transport and to the Madelung fluid can all be characterized as annihilating the first variation of a suitable action. We then discuss the implications of this unified framework for stochastic mechanics: It entails, in particular, a sort of fluid-dynamic reconciliation between Bohm's and Nelson's stochastic mechanics. 
\end{abstract}
\keywords{Kantorovich-Rubinstein metric, calculus of variations, displacement interpolation, entropic interpolation, Schr\"odinger bridge, Nelson's stochastic mechanics.}

\maketitle
\section{Introduction}

Interpolation between two given probability distributions has important applications, for instance in image morphing. The  interpolationg flow of one-time marginals  $\{\mu_t; 0\le t\le 1\}$ may be thought of as a curve in  {\em Wasserstein space} \cite{Vil}\footnote{Vershik has argued \cite{Ver} that it should be called {\em Kantorovich space}. According to him, it was Dobrushin, unaware of Kantorovich's work, that spread this terminology.}. One example is the displacement interpolation flow associated to the Benamou-Brenier formulation of the {\em Optimal Mass Transport} (OMT) problem with quadratic cost \cite{BB}. Other examples are the flows associated to the {\em Schr\"{o}dinger Bridge Problem} (SBP) \cite{S1,S2} and the quantum evolution of a nonrelativistic particle in {\em Nelson's Stochastic Mechanics} (NSM) \cite{N1,N2}. It is known that SBP may be viewed as a ``regularization" of OMT , the latter problem being recovered through a ``zero-noise limit" \cite{Mik, mt, MT,leo,leo2,CGP3,CGP4}. Recently, it was shown by von Renesse \cite{vonR}, for the quantum mechanical {\em Madelung fluid} \cite{G}, and by one of us \cite{GC} for the Schr\"{o}dinger bridge, that their flows satisfy a suitable Newton-like law in Wasserstein space. In \cite{GNT}, the foundations of a Hamilton-Jacobi theory in Wasserstein space were laid. 

In this paper, we show that the solution flows  of OMT, SBP and NSM may all be seen as extremal curves in Wasserstein space of a suitable action. The actions only differ by the presence or the sign of a (relative) {\em Fisher information functional} besides the kinetic energy term. The solution marginals flows correspond to critical points, i.e. annihilate the first variation, of the respective functionals. The extremality condition implies indeed  a generalization of the known Newton-type second laws in Wasserstein space. With respect to \cite{GC}, we present an alternative approach.
There, the Newton's law for {SBP is proved} using its dual representation as a generalized $h$-transform of the equilibrium dynamics. Here, we use a more geometrical approach which consists in taking advantage of the Riemannian metric associated with the Wasserstein distance to implement the fundamental lemma of calculus of variations on the fluid dynamic formulation of SBP. This approach {has the nice feature of being more intrinsic, providing a unified framework for the results of \cite{vonR} and \cite{GC} and allows to prove some new ones. For example, we  are able to treat non-reversible prior laws for SBP and show that the non reversibility translates in the addition of a viscous term in Newton's law. The calculations of this article, however, are more or less formal, and more work is required to turn them into rigorous results. Let us note that for SBP rigorous results can be found in \cite{GC}.  Our results also shed light on the relation between various versions of stochastic mechanics (Schr\"{o}dinger, Bohm, Nelson), a controversial topic made more misterious by the presence of two different Lagrangians within Nelson's stochastic mechanics \cite{N2}.  

Some of the results presented here had been announced, without developing the geometric approach, in our conference article  \cite{CP1}. The paper is outlined as follows. In Section \ref{OMT}, we recall some some basic concepts and result from OMT. Section \ref{Schroedinger} is devoted to Schr\"{o}dinger bridges and entropic interpolation.  In Section \ref{SOC}, we introduce the basic geometric tools in Wasserstein space and provide an intrinsic variational representation for the entropic interpolating flow (Theorem \ref{FDR}). In Section \ref{VA}, we show that the entropic interpolation satisfies a Newton-type law involving the covariant derivative of the velocity generalizing to the case of a general non-reversible prior the results of \cite{GC}. Section \ref{MFSM} is devoted to the classical Madelung fluid and to stochastic mechanics. We first connect to the results of \cite{vonR} on the Madelung fluid. We then consider Nelson's stochastic mechanics and compare it to Bohmian mechanics. We finally briefly discuss the differences in stochastic mechanics between the fluid-dynamic variational principles and the stochastic variational principles.

\section{Background on optimal mass transport}\label{OMT}

The literature on OMT is huge and keeps growing at a fast pace. A selection of monographs and survey papers is \cite{RR,E,Vil,AGS,Vil2, AG,OPV, San}.  Let $\Pi(\mu_0,\mu_1)$ be ``couplings" of $\mu_0$ and $\mu_1$, namely probability distributions on  $\R^d\times \R^d$  with marginals $\mu_0$ and $\mu_1$. In the Kantorovich relaxed formulation of the original Monge problem, one seeks a distribution $\pi\in\Pi(\mu_0,\mu_1)$ which minimizes the index 
\[\int_{\R^N\times\R^N}c(x,y)d\pi(x,y),
\]
where $c$ represents the cost of transporting an infinitesimal ``pebble" from $x$ to $y$. When $c(x,y)=\|x-y\|^2$ and $\mu_0$, $\mu_1$ belong to $\mathcal P_2(\R^d)$, the set of probability measures $\mu$ on $\R^d$ with finite second moment, the Kantorovich-Rubinstein (Wasserstein) quadratic distance, is defined by
\begin{equation}\label{Wasserdist}
W_2(\mu_0,\mu_1)=\left(\inf_{\pi\in\Pi(\mu_0,\mu_1)}\int_{\R^d\times\R^d}\|x-y\|^2d\pi(x,y)\right)^{1/2}.
\end{equation}
As is well known \cite[Theorem 7.3]{Vil}, $W_2$ is a {\em bona fide} distance. Moreover, it provides a most natural way to  ``metrize" weak convergence\footnote{$\mu_k$ converges weakly to $\mu$ if $\int_{\R^N}fd\mu_k\rightarrow\int_{\R^N}fd\mu$ for every continuous, bounded function $f$.} in $\mathcal P_2(\R^N)$ \cite[Theorem 7.12]{Vil}, \cite[Proposition 7.1.5]{AGS}. The quadratic {\em Wasserstein space} $\mathcal W_2$ is defined as the metric space $\left(\mathcal P_2(\R^N),W_2\right)$. It is a {\em Polish space}, namely a separable, complete metric space.

A {\em dynamic} version of the above OMT problem was already {\em in fieri} in Gaspar Monge's 1781 {\em ``M\'emoire sur la th\'eorie des d\'eblais et des remblais"}. It was elegantly accomplished by Benamou and Brenier in \cite{BB} by showing that 
\begin{subequations}\label{eq:BB}
\begin{eqnarray}\label{BB1}&&W_2^2(\mu_0,\mu_1)=\inf_{(\rho,v)}\int_{0}^{1}\int_{\R^d}|v_t(x)|^2\rho_t(dx)dt,\\&&\partial_t \rho_t+\nabla\cdot(v_t\rho_t)=0,\label{BB2}\\&& \rho_0=\mu_0, \quad \rho_1=\mu_1.\label{boundary}
\end{eqnarray}\end{subequations}
Here the flow $\{\rho_t; 0\le t\le 1\}$ varies over continuous maps from $[0,1]$ to $\mathcal P_2(\R^d)$ and $v$ over vector fields.  In \cite[Chapter 7]{Vil2}, Villani provides motivation to study  the time-dependent version of OMT, namely that a time-dependent model gives a more complete description of the
transport and that the richer mathematical structure is useful. Further reasons are the following. It allows to view the optimal transport problem as an (atypical) optimal control problem \cite{CGP}-\cite{CGP4}.
It provides a ground on which the connection with the Schr\"{o}dinger bridge problem appears as a regularization of the former \cite{Mik, mt, MT,leo,leo2,CGP3,CGP4}. Similarly with Nelson's stochastic mechanics, see below. In some applications, such as interpolation of images \cite{CGP5} or spectral morphing \cite{JLG}, the interpolating flow is the object of interest.

Let $\{\mu_t; 0\le t\le 1\}$ be optimal for (\ref{eq:BB}). Then $(\mu_t)$ provides the {\em displacement interpolation} between $\mu_0$ and $\mu_1$ and may be viewed as a constant-speed geodesic joining $\mu_0$ and $\mu_1$ in Wasserstein space (Otto). 

\section{Schr\"{o}dinger bridges and entropic interpolation}\label{Schroedinger}

 Let $\Omega=C([0,1];\R^d)$ be the space of $\R^d$ valued continuous functions, $P$ a positive measure\footnote{It is sometimes convenient to work with infinite measures $P$. This is the case, for instance, when $P$ is  {\em stationary Wiener measure} (reversible Brownian motion) on $\R^d$} on $\Omega$. We recall the definition of the relative entropy functional 
\[ \D(Q\| P) = \begin{cases} \E_{Q}\left( \log \frac{\De Q}{\De P} \right), \quad & \mbox{if $ Q \ll P$}\\+ \infty \quad & \mbox{otherwise} \end{cases} \]
For given $P,\mu_0,\mu_1$, the \emph{Schr\"odinger Problem} is the problem of minimizing the relative entropy with respect to $P$ in the set ${\cal P}(\mu_0,\mu_1)$ of all laws whose marginals at times $t=0,1$ are $\mu_0$ and $\mu_1$, namely
\begin{equation}\label{problem}{\rm Minimize}\quad \D(Q\|P) \quad {\rm over} \quad Q\in{\cal P}(\mu_0,\mu_1).
\end{equation} 
The optimal solution is called the \emph{Schr\"odinger bridge} between $\mu_0$ and $\mu_1$ over $P$, and  its marginal flow $(\mu_t)$ is the \emph{entropic interpolation}.
There are at least three distinct situations when this problem arises naturally:
\begin{itemize}
\item[(a)] As a model for the ``hot gas experiment": This was Schr\"odinger's original motivation.
\item[(b)] As a statistical inference problem, in the entropic approach to Bayesian statistics developed by Jaynes, see e.g. \cite{Jaynes57,Jaynes82}.
\item[(c)] As a regularization of the Monge Kantorovich problem.
\end{itemize}
\paragraph{hot gas experiment} Here, the prior distribution $P$ is a stationary Langevin dynamics for a generator $\cal{L}$ of the form
\[ \mathcal{L} = \frac{\sigma}{2}\Delta - \nabla U \cdot \nabla, \]
which represents an \emph{equilibrium dynamics} for a particle system.  
 At time $t=0$ we are given $N$ independent particles  $(X^i_t)_{t \leq 1, i \leq N}$ whose configuration is $\mu$.
\[ \mu := \frac{1}{N} \sum_{i=1}^N \delta_{X^i_0}. \] 
We then let the particles travel independently following the Langevin dynamics and look at their configuration $\nu$ at $t=1$.
 \[ \nu:= \frac{1}{N}\sum_{i=1}^N \delta_{X^i_1} . \]
If the Langevin dynamics has good ergodic properties and $N$ is very large, one expects $\nu$ to be very close to the invariant measure of $P$, which we denote $\mathbf{m}$. However, although very unlikely, it is still possible to observe an \emph{unexpected configuration}, meaning that $\nu$ is significantly\footnote{Schr\"odinger writes in \cite{S2} \emph{`` un \'ecart spontan\'e et considerable par rapport \`a cette uniformit\'e" }}  different from $\mathbf{m}$. Schr\"odinger's question is to find, conditionally on this rare event, the most likely evolution of the particle system. In the limit when $N \rightarrow + \infty$, Sanov's theorem tells that the \emph{most likely} evolution is given by the minimiser of \eqref{problem} (see also \cite[sec.6]{leo}). This derivation of the Scr\"odinger problem can be viewed as a stochastic counterpart to the ``lazy gas experiment" of optimal transport \cite[Ch. 16]{Vil2}. Indeed
\begin{itemize}
	\item Particles choose their final destination minimizing the relative entropy instead of the mean square distance;
	\item Particles travel along Brownian bridges instead of geodesics.
\end{itemize}
We refer to \cite[sec.6]{leo} for an extensive treatment of this analogy.
Quite remarkably, the description of the hot gas experiment we gave is very similar to the original formulation of the problem that Schr\"odinger proposed back in 1932, (see \cite{S2}) when none of the modern tools of probability or of modern optimal transport theory was available.

\paragraph{Maximum entropy principle and statistical inference}
Suppose now that $P$ is a prior distribution on some hypothesis space. When measurements are made and new information becomes available, Jaynes' maximum entropy principle asserts that the prior distribution should be updated by picking, among all laws compatible with the information, the one which minimizes the relative entropy w.r.t. to $P$\footnote{The confusion between minimization and maximization comes from the different definitions of entropy. 
%It is beyond the scopes of this not to resolve this
}. Loosely speaking, it is a ``maximum ignorance principle", saying that the posterior distribution should make the least possible prediction about everything which is beyond the available information. When the hypothesis space is $\Omega$, and the new information comes in the form of marginal laws, the problem of finding the posterior becomes SBP.

\paragraph{Regularization of Optimal transport}
Suppose now that the prior measure $P$ is {\em Markovian}. In this case, the classical results of Jamison \cite{Jam} imply that the solution of (\ref{problem})  is also Markovian. Thus, we can restrict our search to  ${\cal P}^M(\mu_0,\mu_1)$, namely Markovian measures in ${\cal P}(\mu_0,\mu_1)$.  Along the lines of   \cite[p.683]{CGP3}, \cite[Corollary 5.8]{GLR}, \cite{Leg} we get that the entropic flow of the Schr\"{o}dinger bridge can be characterized  through the following fluid dynamic problem
\begin{subequations}\label{FDproblem}
\begin{eqnarray}\label{SBB1}\inf_{(\rho,v)}\int_{t_0}^{t_1}\int_{\mR^d}\left[\frac{1}{2\sigma}|v_t(x)-v^P_t(x)|^2\right.\\\left.\nonumber+\frac{\sigma}{8}|\nabla\log\frac{\rho}{\rho^P}_t(x)|^2\right]\rho_t(x)dx dt,\\ \partial_t \rho_t+\nabla\cdot(v_t\rho_t)=0,\label{SBB2}\\ \rho_0=\nu_0, \quad \rho_1=\nu_1.\label{SBB3}
\end{eqnarray}
\end{subequations}
Here $v$ and $v^P$ are the {\em current velocity fields} \cite{N1} of the trial and prior measure, respectively.
Comparing (\ref{FDproblem}) to the OMT with prior formulated and studied in \cite{CGP3}, see also \cite{CGP4} for the Gauss-Markov case, we see that the essential difference is that there is here an extra term in the action functional
which has the form of a {\em relative Fisher information}.  To find the connection to the classical OMT, let us specialize to the situation where the prior $P=W^{\varepsilon}$. In that case, $v^P=u^P=0$ and, multiplying the criterion by $\varepsilon$, we get the problem
\begin{subequations}\label{FDproblem'}
\begin{eqnarray}\label{SBB1'}\inf_{(\rho,v)}\int_{t_0}^{t_1}\int_{\mR^d}\left[\frac{1}{2}|v_t(x)|^2+\frac{\varepsilon^2}{8}|\nabla\log\rho_t(x)|^2\right]\rho_t(x)dx dt,\\ \partial_t \rho_t+\nabla\cdot(v_t\rho_t)=0,\label{SBB2'}\\ \mu_0=\nu_0, \quad \mu_1=\nu_1.\label{SBB3'}
\end{eqnarray}
\end{subequations}
If $\varepsilon\searrow 0$ it appears that in the limit we get the Benamou-Brenier formulation of OMT (\ref{eq:BB}). This is indeed the case, see  \cite{Mik, mt, MT,leo,leo2} and  \cite{CGP3,CGP4} for the case with prior.

Alternatively, since both OMT and SBP admit a static formulation \cite{leo},
the \emph{regularization}  of OMT as in (\ref{Wasserdist}) can be obtained by adding a term proportional to the  Shannon entropy on $\mathbb{R}^d \times \mathbb{R}^d$
\begin{equation}\label{eq:regularizedOT}
 \inf_{\pi \in \Pi(\mu_0,\mu_1)}  \int_{\mathbb{R}^d \times \mathbb{R}^d }  |x-y|^2 \De \pi(x,y) + 2 \varepsilon \int_{\R^d \times \R^d} \pi(x,y) \log \pi(x,y) \De x\De y,
\end{equation}
where we used  the same notation for a measure and its density function. Problem \eqref{eq:regularizedOT} is the projection onto the marginals at $t=0,1$ of the Schr\"odinger problem for a prior $P$ which is stationary Wiener measure, i.e. the stationary law for the Markov generator
\[ \mathcal{L}^{\varepsilon} = \frac{\varepsilon}{2} \Delta. \]
It is well known that the solution to the path-space formulation of the Schr\"odinger problem \eqref{problem} can be recovered  by mixing the optimal coupling in \eqref{eq:regularizedOT} with Brownian bridges \cite{F2}, \cite[Sec.1]{leo}. For the computational relevance of (\ref{eq:regularizedOT}) see (\cite{Cuturi}). For the dynamic form of entropic regularization of OMT (\ref{FDproblem'}), discovered in \cite{CGP3}\footnote{Incidentally, this characterization of the Schr\"{o}dinger bridge flow answers at once a question posed by Carlen \cite[pp. 130-131]{Carlen}.}, see instead \cite{LYO}. Notice that it differs from the regularization of the Benamou-Brenier formulation which involves replacing the continuity equation with a Fokker-Planck equation \cite{CGP5}.

Conditions for existence and uniqueness for SBP and properties of the minimizing measure have been studied by many authors, most noticeably by Fortet, Beurlin, Jamison and F\"{o}llmer \cite{For,Beu,Jam,F2}.  If there is at least one $Q$ in ${\cal P}(\mu_0,\mu_1)$ such that
$\D(Q\|P)<\infty$, there exists a unique minimizer $Q^*$. Existence is guaranteed under conditions on $P$, $\rho_0$ and $\rho_1$, see \cite[Proposition 2.5]{leo2}. We shall tacitly assume henceforth that they are satisfied so that $Q^*$ is well defined. It has been observed since the early nineties that SBP can be turned, thanks to {\em Girsanov's theorem},  into a stochastic control problem with atypical boundary constraints, see \cite{DP,Bl,DPP,PW,FHS}.

\section{Second order calculus in Wasserstein space}\label{SOC}

It is known that the \emph{displacement interpolation} between two measures is a constant speed geodesic in the quadratic Wasserstein space $\mathcal W_2$. In a seminal paper \cite{O}, Otto showed that we can go beyond this. More precisely, it is possible to define formally a kind of Riemannian metric on $\cP_2(\R^d)$ for which the displacement interpolation is a constant speed geodesic.
We sketch here the main steps in the construction of this metric. All what follows is only formally correct, and much work is required to turn it into rigorous statements. For this, we refer the reader to \cite{AG}, \cite{Gigli}.
The first step consists in identifying the tangent space at $\mu$ with the space of square integrable vector fields.
\[ \tsp_{\mu} := \overline{\left\{ \nabla \varphi; \varphi \in C^{\infty}_c  \right\}}^{L^2(\mu)} \]
The second step is to define the first derivative (velocity field) $v_t \in \tsp_{\mu_t}$  of a curve $(\mu_t)$ through the continuity equation 
\[\partial_t \mu_t + \nabla \cdot ( v_t \mu_t)=0, \quad v_t \in \tsp_{\mu_t}. \]
 
Then, one defines the Riemannian metric by means of the $L^2$ product 
\[\langle \nabla \varphi, \nabla \psi \rangle_{\mathbf{T}_{\mu}} := \int_{\R^d} \langle \nabla \varphi(x), \nabla \psi (x)\rangle \, \mu(x) \De x, \]
where $\langle ., . \rangle$ stands for the standard inner product of $\R^d$. The Benamou-Brenier formula establishes that the displacement interpolation is a constant speed geodesic for this Riemannian structure, as it minimizes the energy functional among all curves with a given start and end. The next task is to define the covariant derivative $\covdev_{v_t} u_t$ of the vector field $(u_t)$ along the curve $(\mu_t)$ in such a way that the torsion free identity and the compatibility with the metric are satisfied. It turns out that to do so, the right definition is to set (cf. \cite[eq 6.7, Def 6.8]{AG})
\begin{equation}\label{eq:covdedv} \covdev_{v_t} u_t := \mathrm{P}_{\mu_t} \left( \partial_t u_t +  {\rm D}_{v_t} u_t \right) ,
 \end{equation} 
provided $u_t,v_t$ are smooth enough. Here, ${\rm D}_{v_t} u_t(x)$ denotes the Jacobian of $u_t$ at $x$ applied to $v_t$ and $\mathrm{P}_{\mu}: L^2(\mu) \rightarrow \tsp_{\mu}$ is the orthogonal projection. In particular, we have that the acceleration of a curve is given by
\begin{equation}\label{eq:acc} \covdev_{v_t} v_t = \partial_t v_t + \frac{1}{2} \nabla |v_t|^2.  
\end{equation}
Once the covariant derivative has been constructed, we can also define the Jacobian of a vector field. A vector field $\scrU$ on $\cP_2(\R^d)$ is a map 
 $ \mu \mapsto \scrU(\mu) \in\tsp_{\mu}$ that associates to $\mu\in \cP_2(\R^d)$ a gradient vector field over $\mathbb{R}^d$. We define the Jacobian by
\[ \otJac_v \scrU = \covdev_{v_t} \scrU(\mu_t)\Big|_{t=0}, \]
where $\mu_t$ is any curve such that $\mu_0= \mu$ and $v_0=v$
Finally, we deonte by $^A\otJac$ the antisymmetric part of the Jacobian, i.e.
\begin{equation}\label{eq:visc} \langle \visc_v \scrU,w \rangle_{\tsp_{\mu}} :=  \frac{1}{2}\left( \langle \otJac_v \scrU,w \rangle_{\tsp_{\mu}}-\langle \otJac_w \scrU,v \rangle_{\tsp_{\mu}} \right) 
\end{equation}
\paragraph*{Some classical functionals and their gradients.}
Consider a functional $\mathcal{F}:\cP_2(\R^d) \rightarrow \R$. Then, its gradient $\otgrad \mathcal{F} $ is defined via
\begin{equation}\label{eq:otgrad} \langle  \otgrad \mathcal{F},v \rangle_{\tsp_{\mu}} = \lim_{h \rightarrow 0} \frac{1}{h}\left( \mathcal{F}(\mu_{h}) - \mathcal{F}(\mu_0) \right), 
\end{equation}
where $(\mu_t)$ is any curve such that $\mu_0=\mu$ and $v_0=v$.

Let us recall some common functionals and compute their gradients. As we will see, they express the acceleration of the curves we study in this article.
\begin{itemize}
\item For a potential $U$, the corresponding energy functional is $\mathcal{E}_U(\mu)$ is
 \begin{equation}
 \label{eq:energy} \mathcal{E}_U(\mu)=\int_{\R^d} U(x) \mu(x) \De x .
  \end{equation}
Its gradient is easily computed. We have
\begin{equation}\label{eq:energygrad} \otgrad \mathcal{E}_U(\mu) = \nabla U  \end{equation}
However, one has to keep in mind that the right hand side still depends on $\mu$, because the properites of $\nabla U$ as a tangent vector on $\langle .,.\rangle_{\tsp_{\mu}}$ vary with $\mu$, although, as a vector field on $\R^d$, $\nabla U$ doesn't change.
 \item The Shannon Entropy functional is
 \begin{equation}\label{eq:shannon} \mathcal{S}(\mu) = - \int_{\R^d} \log \mu(x) \mu(x)dx \end{equation}
The gradient of $\mathcal{S}$ is known:
\begin{equation}\label{eq:shannongrad} \otgrad \mathcal{S}(\mu) = - \nabla \log \mu \end{equation}
\item The Fisher information is the squared norm of the gradient of the Shannon entropy, i.e.
\begin{equation}\label{eq:fish} \mathcal{I}(\mu) := \int_{\R^d} |\nabla \log  \mu |^2  \mu(x) dx \end{equation}	Its gradient is 
\begin{equation}\label{eq:fishgrad}    \otgrad \mathcal{I}(\mu) =-\nabla \left( |\nabla \log \mu|^2+2 |\Delta \log \mu| \right)  ,  
\end{equation}
see for instance \cite[pg. 12]{vonR}, \cite[Eq. (5)]{GC}. It should be observed that this relation can already be found in \cite[p.172]{Bo}.
\end{itemize}

\subsection{Time-symmetric fluid-dynamic formulation of SBP}\label{time-symmetric}

We proceed to derive below a characterization of the entropic interpolation  (\ref{FDproblem}) in the Riemannian-like geometry of Wasserstein space when the prior is any stationary Markov diffusion measure. Assume that the prior distribution $P$ is the stationary dynamics for a Markov generator of the form
\begin{equation}\label{browniandiff}	\mathcal{L} = \frac{\sigma}{2}\Delta + b \cdot \nabla, 
\end{equation}
where the vector field $b$  may not be of gradient type.
We obtain the following fluid-dynamical formulation
\begin{theorem}\label{FDR}
The entropic interpolation between $\mu_0,\mu_1$ is an optimal solution for the problem
\begin{subequations}\label{eq:FDproblemgio}
\begin{eqnarray}\label{FDproblem1gio}&&\inf_{(\rho,v)} \frac{1}{2}\int_{0}^1 |v_t - \frac{\sigma}{2}\otgrad \mathcal{S}(\rho_t) - \mathscr{B}(\rho_t)   |^2_{\mathbf{T}_{\rho_t}}  \De t	 \\&&\partial_t \rho_t+\nabla\cdot(v_t\rho_t)=0, \quad v_t \in \tsp_{\rho_t} \label{FDproblem2gio}\\&& \rho_0=\mu_0, \quad \rho_1=\mu_1,\label{FDproblem3gio}
\end{eqnarray}
\end{subequations}
where 
\begin{equation}\label{eq:calB} \mathscr{B}(\rho) :=\proj_{\rho}(b),
\end{equation} 
and  $\mathrm{P}_{\rho} : L^2(\rho) \rightarrow \mathbf{T}_{\rho}$ is the orthogonal projection and $\mathcal{S}$ the Shannon entropy \eqref{eq:shannon}.
\end{theorem}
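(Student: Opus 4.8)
The plan is to identify the functional in \eqref{eq:FDproblemgio} with a multiple of the already known fluid--dynamic Schr\"odinger functional \eqref{FDproblem}, up to a \emph{null Lagrangian} (a term depending only on $\mu_0,\mu_1$); since the minimising flow of \eqref{FDproblem} is, by its very derivation, the entropic interpolation, this gives the statement. Since $\otgrad\mathcal{S}(\rho)=-\nabla\log\rho$ by \eqref{eq:shannongrad}, the functional in \eqref{eq:FDproblemgio} reads $\tfrac12\int_0^1\big|v_t+\tfrac{\sigma}{2}\nabla\log\rho_t-\mathscr{B}(\rho_t)\big|^2_{\tsp_{\rho_t}}\,\De t$. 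In our setting $P$ is the stationary law of $\mathcal{L}=\tfrac{\sigma}{2}\Delta+b\cdot\nabla$ with invariant density $\mathbf{m}$, so the prior flow is constant, $\rho^P_t\equiv\mathbf{m}$, and the prior current velocity is the time-independent field $v^P:=b-\tfrac{\sigma}{2}\nabla\log\mathbf{m}$, which obeys the stationarity identity $\nabla\cdot(v^P\mathbf{m})=0$. With this, \eqref{FDproblem} says that the entropic interpolation minimises, over curves $\rho$ joining $\mu_0$ to $\mu_1$ and current velocity fields $w$ solving the continuity equation,
\[
G(\rho,w):=\int_0^1\!\!\int_{\R^d}\Big[\tfrac{1}{2\sigma}\,|w_t-v^P|^2+\tfrac{\sigma}{8}\,\big|\nabla\log\tfrac{\rho_t}{\mathbf{m}}\big|^2\Big]\rho_t\,\De x\,\De t .
\]

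Next I would reduce both problems to functionals of the curve alone. In \eqref{eq:FDproblemgio} the constraint $v_t\in\tsp_{\rho_t}$ together with the continuity equation force $v_t$ to be the metric velocity $v_t[\rho]$ of $(\rho_t)$, so \eqref{eq:FDproblemgio} is the minimisation over curves of $F(\rho):=\tfrac12\int_0^1|v_t[\rho]+\tfrac{\sigma}{2}\nabla\log\rho_t-\mathscr{B}(\rho_t)|^2_{\tsp_{\rho_t}}\De t$. In \eqref{FDproblem}, for a \emph{fixed} curve $\rho$ one splits $w_t=v_t[\rho]+\xi_t$ with $v_t[\rho]\in\tsp_{\rho_t}$ and $\xi_t\in\tsp_{\rho_t}^{\perp}$ (this is forced by the continuity equation); the Fisher term is independent of $w$, and by orthogonality the choice $\xi_t=v^P-\proj_{\rho_t}(v^P)$ is optimal, so $\min_w G(\rho,w)=\bar G(\rho)$ with
\[
\bar G(\rho):=\int_0^1\Big[\tfrac{1}{2\sigma}\,\big|v_t[\rho]-\proj_{\rho_t}(v^P)\big|^2_{\tsp_{\rho_t}}+\tfrac{\sigma}{8}\,\big|\nabla\log\tfrac{\rho_t}{\mathbf{m}}\big|^2_{\tsp_{\rho_t}}\Big]\De t .
\]
It thus suffices to show $F=\sigma\bar G+\mathrm{const}(\mu_0,\mu_1)$.

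Here is the computation. From $b=v^P+\tfrac{\sigma}{2}\nabla\log\mathbf{m}$ and $\proj_\rho(\nabla\log\mathbf{m})=\nabla\log\mathbf{m}$ one gets $\mathscr{B}(\rho)=\proj_\rho(v^P)+\tfrac{\sigma}{2}\nabla\log\mathbf{m}$, so that, writing $A_t:=v_t[\rho]-\proj_{\rho_t}(v^P)$ and $C_t:=\tfrac{\sigma}{2}\nabla\log(\rho_t/\mathbf{m})$ (both in $\tsp_{\rho_t}$), we have $v_t[\rho]+\tfrac{\sigma}{2}\nabla\log\rho_t-\mathscr{B}(\rho_t)=A_t+C_t$. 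Expanding the square in $F$ and using $\tfrac{\sigma}{8}|\nabla\log(\rho_t/\mathbf{m})|^2=\tfrac{1}{2\sigma}|C_t|^2$, all terms cancel except the cross term: $F(\rho)-\sigma\bar G(\rho)=\int_0^1\langle A_t,C_t\rangle_{\tsp_{\rho_t}}\De t$. Now (i) the continuity equation and the definition of relative entropy give $\langle v_t[\rho],\tfrac{\sigma}{2}\nabla\log(\rho_t/\mathbf{m})\rangle_{\tsp_{\rho_t}}=\tfrac{\sigma}{2}\tfrac{\De}{\De t}\D(\rho_t\|\mathbf{m})$, while (ii) stationarity gives, with $\psi_t:=\log(\rho_t/\mathbf{m})$ and $\rho_t=\mathbf{m}e^{\psi_t}$,
\[
\int_{\R^d}\langle v^P,\nabla\psi_t\rangle\,\rho_t\,\De x=\int_{\R^d}\langle v^P\mathbf{m},\nabla e^{\psi_t}\rangle\,\De x=-\int_{\R^d}e^{\psi_t}\,\nabla\cdot(v^P\mathbf{m})\,\De x=0,
\]
hence $\langle\proj_{\rho_t}(v^P),C_t\rangle_{\tsp_{\rho_t}}=\langle v^P,C_t\rangle_{L^2(\rho_t)}=0$. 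Combining, $\langle A_t,C_t\rangle_{\tsp_{\rho_t}}=\tfrac{\sigma}{2}\tfrac{\De}{\De t}\D(\rho_t\|\mathbf{m})$ and therefore $F(\rho)=\sigma\bar G(\rho)+\tfrac{\sigma}{2}\big(\D(\mu_1\|\mathbf{m})-\D(\mu_0\|\mathbf{m})\big)$. The added term is fixed by the endpoints, so $F$ and $\bar G$ — hence \eqref{eq:FDproblemgio} and \eqref{FDproblem} — are minimised by the same flow, which is the entropic interpolation (paired, in \eqref{eq:FDproblemgio}, with its metric velocity).

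The algebra is harmless; the real work lies in the foundations. Everything above is carried out in the formal Otto calculus; the velocity elimination in the second step rests on a Helmholtz--Hodge splitting of $L^2(\rho_t)$ and therefore on enough regularity of $(\rho_t,v^P)$; and one must justify that the infimum in \eqref{FDproblem} is attained by a genuine, in general non-gradient, current velocity field. A rigorous treatment (in the reversible case) is the one of \cite{GC}; an alternative that bypasses \eqref{FDproblem} altogether is to verify directly, using the Schr\"odinger factorisation of the optimal law — whose current velocity equals $v^P+\tfrac{\sigma}{2}\nabla\log\rho^*_t$ — that this candidate annihilates the first variation of the action, which is also the viewpoint adopted in the next section.
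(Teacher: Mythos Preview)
Your argument is correct, but it follows a different path from the paper's. The paper does \emph{not} take the time-symmetric fluid formulation \eqref{FDproblem} as input: it starts from the Girsanov representation of the relative entropy, $\D(Q\|P)=\tfrac12\int_0^1\int|c_t-b|^2\rho_t\,\De x\,\De t$, rewrites the Fokker--Planck equation as a continuity equation with current $c_t-\tfrac{\sigma}{2}\nabla\log\rho_t$, then invokes the structural fact (from \cite{DP}) that the optimal correction $c^*_t-b$ is of gradient type to project onto $\tsp_{\rho_t}$, and finally performs the change of variables $v_t=w_t-\tfrac{\sigma}{2}\nabla\log\rho_t$. Your route, by contrast, accepts \eqref{FDproblem} as established, eliminates the velocity in both problems via the Helmholtz splitting of $L^2(\rho_t)$, and shows that the two reduced actions differ by the null Lagrangian $\tfrac{\sigma}{2}\tfrac{\De}{\De t}\D(\rho_t\|\mathbf m)$, exploiting the stationarity identity $\nabla\cdot(v^P\mathbf m)=0$ in a clean way. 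Your approach is more intrinsic and makes the exact--differential structure transparent; the paper's is more self-contained (it does not need \eqref{FDproblem}) and isolates the probabilistic ingredient --- the gradient form of the optimal drift --- that ultimately justifies the projection step you perform by Hodge decomposition. Both assume stationarity of $P$, and both are formal at the same points (regularity for the splitting, attainment of the infimum).
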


\begin{proof}
It is known that the Schr\"odinger bridge is the law of a diffusion process $Q$ whose generator is of the same form as $P$, i.e.
\[ \frac{\sigma}{2}\Delta + c_t \cdot \nabla,  \]
for some time-dependent vector field $c_t(x)$. For such processes, we can rewrite the Kullback-Liebler divergence using It\^o calculus as
\begin{equation*}\label{entropyrep} \D( Q\|P) = \D( q_0 \| p_0 )+ \frac{1}{2}\E_{Q} \big( \int_{0}^{1} |(c_t-b)(\omega_t)|^2 \De t \big)= 
\frac{1}{2}\int_{0}^1\int_{\R^d} |c_t-b|^2(x) \rho_t(x) \De x \De t,
 \end{equation*}
 where $\rho_t$ is the marginal at time $t$ of $Q$. The marginal flow $(\rho_t)$ is known to satisfy the Fokker Planck equation
\begin{equation*}
\partial_t \rho_t = - \nabla \cdot ( c_t \rho_t ) + \frac{\sigma}{2} \Delta \rho_t,
\end{equation*}
Therefore we arrive at a first reformulation of the problem as
\begin{subequations}\label{eq:FDproblemgioaux1}
\begin{eqnarray}\label{FDproblem1gioaux1}&&	\inf_{(\rho,c)} \frac{1}{2}\int_{0}^1\int_{\R^d} |c_t-b|^2(x) \rho_t(x) \De x \De t, \\&&\partial_t \rho_t+\nabla\cdot\left((c_t -\frac{\sigma}{2}\nabla \log\rho_t )\rho_t\right)=0,\label{FDproblem2gioaux1}\\&& \rho_0=\mu_0, \quad \rho_1=\mu_1,\label{FDproblem3gioaux1}
\end{eqnarray}
\end{subequations}
Next, we claim that the latter is equivalent to
\begin{subequations}\label{eq:FDproblemgioaux2}
\begin{eqnarray}\label{FDproblem1gioaux2}&&	\inf_{(\rho,w)}\frac{1}{2}\int_{0}^1\int_{\R^d} |w_t -\proj_{\rho_t}(b)|^2(x) \rho_t(x) \De x \De t, \\&&\partial_t \rho_t+\nabla\cdot \left((w_t -\frac{\sigma}{2}\nabla \log\rho_t )\rho_t \right)=0, \quad w_t \in \tsp_{\rho_t}\label{FDproblem2gioaux2}\\&& \rho_0=\mu_0, \quad \rho_1=\mu_1,\label{FDproblem3gioaux2}
\end{eqnarray}
\end{subequations}
To see this, assume that $(\rho^*_t,w^*_t)$ is optimal for \eqref{eq:FDproblemgioaux2}. Then, it is not hard to show that if we set $c^*_t:= w^*_t +b-\proj_{\rho^*_t}(b)$, then $(\rho^*_t,c^*_t )$ is admissible and optimal for \eqref{eq:FDproblemgioaux1}. On the other hand it is known, see e.g. \cite{DP} that the optimal solution $(\rho^*_t,c^*_t)$ in \eqref{eq:FDproblemgioaux1} is such that $c^*_t-b$ is of gradient type, i.e. it belongs to $\tsp_{\rho^*_t}$. Thus, if we set $w^*_t := c^*_t -b + \proj_{\rho^*_t}(b)$                                                                                                                                                                                   we have that $(\rho^*_t,w^*_t)$ is admissible for \eqref{eq:FDproblemgioaux2}; after some standard calculations it is not hard to see that it is also optimal. Upon setting $v_t= w_t - \frac{\sigma}{2}\nabla \log \rho_t $, we obtain from \eqref{eq:FDproblemgioaux1} another equivalent formulation of \eqref{eq:FDproblemgioaux2} as
\begin{subequations}\label{eq:FDproblemgioaux3}
\begin{eqnarray}\label{FDproblem1gioaux3}&& \inf_{(\rho,v)}	\frac{1}{2}\int_{0}^1\int_{\R^d} |v_t + \frac{\sigma}{2} \nabla \log \rho_t -\proj_{\rho_t}(b)|^2(x) \rho_t(x) \De x \De t, \\&&\partial_t \rho_t+\nabla\cdot(v_t \rho_t)=0, \quad v_t \in \tsp_{\rho_t}\label{FDproblem2gioaux3}\\&& \rho_0=\mu_0, \quad \rho_1=\mu_1,\label{FDproblem3gioaux3}
\end{eqnarray}
\end{subequations}
The conclusion now follows from the definitons  \eqref{eq:calB},\eqref{eq:shannongrad} 
\end{proof}

\begin{lemma}\label{lem:2ndFDformulation}
	The formulation \eqref{eq:FDproblemgio} is equivalent to 
\begin{subequations}\label{eq:FDproblemgio2}
\begin{eqnarray}\label{FDproblem1gio2}&&\inf_{(\rho,v)} \frac{1}{2}\int_{0}^1 |v_t|^2_{\tsp_{\rho_t}}+  \frac{\sigma^2}{4}\mathcal{I}(\rho_t)+|\mathscr{B}(\rho_t)|^2_{\tsp_{\rho_t}} +\mathcal{E}_{\sigma \nabla \cdot b} (\rho_t)- 2\langle \mathscr{B}(\rho_t),v_t \rangle_{\tsp_{\rho_t}}   \De t	 \\&&\partial_t \rho_t+\nabla\cdot(v_t\rho_t)=0, \quad v_t \in \tsp_{\rho_t} \label{FDproblem2gio2}\\&& \rho_0=\mu_0, \quad \rho_1=\mu_1,\label{FDproblem3gio2}
\end{eqnarray}
\end{subequations}

\end{lemma}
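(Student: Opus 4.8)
The plan is to expand the square in the action functional \eqref{FDproblem1gio} and recognize the resulting terms as the functionals introduced earlier. Writing $\otgrad\mathcal{S}(\rho_t) = -\nabla\log\rho_t$ by \eqref{eq:shannongrad}, the integrand $\frac{1}{2}|v_t - \frac{\sigma}{2}\otgrad\mathcal{S}(\rho_t) - \mathscr{B}(\rho_t)|^2_{\tsp_{\rho_t}}$ expands, using bilinearity of $\langle\cdot,\cdot\rangle_{\tsp_{\rho_t}}$, into six terms: $\frac{1}{2}|v_t|^2_{\tsp_{\rho_t}}$, the cross term $-\langle v_t, \mathscr{B}(\rho_t)\rangle_{\tsp_{\rho_t}}$, the term $-\frac{\sigma}{2}\langle v_t, \otgrad\mathcal{S}(\rho_t)\rangle_{\tsp_{\rho_t}}$, the term $\frac{\sigma^2}{8}|\otgrad\mathcal{S}(\rho_t)|^2_{\tsp_{\rho_t}} = \frac{\sigma^2}{8}\mathcal{I}(\rho_t)$ (by \eqref{eq:fish}), the term $\frac{\sigma}{2}\langle\otgrad\mathcal{S}(\rho_t),\mathscr{B}(\rho_t)\rangle_{\tsp_{\rho_t}}$, and $\frac{1}{2}|\mathscr{B}(\rho_t)|^2_{\tsp_{\rho_t}}$.

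Comparing with \eqref{FDproblem1gio2}, I need to show that, \emph{after integration in $t$ along any admissible curve} $(\rho_t,v_t)$ satisfying \eqref{FDproblem2gio}, the sum of the two ``mixed'' terms $-\frac{\sigma}{2}\langle v_t,\otgrad\mathcal{S}(\rho_t)\rangle_{\tsp_{\rho_t}}$ and $\frac{\sigma}{2}\langle\otgrad\mathcal{S}(\rho_t),\mathscr{B}(\rho_t)\rangle_{\tsp_{\rho_t}}$ equals (up to the factor-of-$2$ normalization between the two formulations) the term $\mathcal{E}_{\sigma\nabla\cdot b}(\rho_t)$. The key identity is the following \emph{integration-by-parts / chain-rule} fact: along the continuity equation, $\frac{d}{dt}\mathcal{S}(\rho_t) = \langle\otgrad\mathcal{S}(\rho_t),v_t\rangle_{\tsp_{\rho_t}} = -\int \nabla\log\rho_t\cdot v_t\,\rho_t\,\De x = \int \nabla\cdot v_t\,\rho_t\,\De x$. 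More usefully, for the $\mathscr{B}$ term, since $\mathscr{B}(\rho) = \proj_\rho(b)$ and $\otgrad\mathcal{S}(\rho) = -\nabla\log\rho \in \tsp_\rho$, orthogonality of the projection gives $\langle\otgrad\mathcal{S}(\rho),\mathscr{B}(\rho)\rangle_{\tsp_\rho} = \langle\otgrad\mathcal{S}(\rho),b\rangle_{L^2(\rho)} = -\int\nabla\log\rho\cdot b\,\rho\,\De x = -\int\nabla\rho\cdot b\,\De x = \int \rho\,\nabla\cdot b\,\De x = \mathcal{E}_{\nabla\cdot b}(\rho)$, by \eqref{eq:energy}. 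Similarly $\langle v_t,\otgrad\mathcal{S}(\rho_t)\rangle_{\tsp_{\rho_t}}$, interpreted against the continuity equation, is a total time derivative $\frac{d}{dt}\mathcal{S}(\rho_t)$, hence $\int_0^1$ of it equals $\mathcal{S}(\mu_1)-\mathcal{S}(\mu_0)$, a constant depending only on the fixed endpoints, so it does not affect the optimization and may be dropped.

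Assembling: the functional in \eqref{FDproblem1gio} equals $\int_0^1\big[\frac{1}{2}|v_t|^2_{\tsp_{\rho_t}} - \langle v_t,\mathscr{B}(\rho_t)\rangle_{\tsp_{\rho_t}} + \frac{\sigma^2}{8}\mathcal{I}(\rho_t) + \frac{\sigma}{2}\mathcal{E}_{\nabla\cdot b}(\rho_t) + \frac{1}{2}|\mathscr{B}(\rho_t)|^2_{\tsp_{\rho_t}}\big]\De t$ plus the boundary constant $-\frac{\sigma}{2}(\mathcal{S}(\mu_1)-\mathcal{S}(\mu_0))$. Multiplying the integrand of \eqref{FDproblem1gio2} by $\frac{1}{2}$ — i.e., reading \eqref{FDproblem1gio2} with its displayed $\frac{1}{2}$ prefactor — matches term by term ($\frac{\sigma^2}{4}\cdot\frac12 = \frac{\sigma^2}{8}$, $\mathcal{E}_{\sigma\nabla\cdot b}\cdot\frac12 = \frac{\sigma}{2}\mathcal{E}_{\nabla\cdot b}$, the $-2\langle\mathscr{B},v\rangle\cdot\frac12 = -\langle\mathscr{B},v\rangle$, etc.), so the two problems have the same minimizer and objective values differing by the additive constant; hence they are equivalent. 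I would state these normalization bookkeeping checks as ``routine'' and not belabor them.

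The main obstacle — and the only place real care is needed — is the manipulation $\int\nabla\log\rho\cdot b\,\rho\,\De x = \int\nabla\rho\cdot b\,\De x = -\int\rho\,\nabla\cdot b\,\De x$, which requires an integration by parts with vanishing boundary terms at infinity, and the implicit assumption that $b$ and $\nabla\cdot b$ are regular enough that $\mathcal{E}_{\sigma\nabla\cdot b}(\rho_t)$ is finite along the curves considered; these are exactly the ``formal'' points the paper has already flagged. I would note that one also needs $\rho_t$ to be absolutely continuous with a smooth positive density, which holds for the competitors relevant to the Schrödinger problem, and that the projection identity $\langle \xi,\proj_\rho(b)\rangle_{\tsp_\rho}=\langle\xi,b\rangle_{L^2(\rho)}$ for $\xi\in\tsp_\rho$ is just the definition of orthogonal projection. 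Everything else is algebra.
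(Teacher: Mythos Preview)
Your proposal is correct and follows essentially the same approach as the paper's proof: expand the square, identify $\frac{\sigma^2}{8}|\otgrad\mathcal{S}|^2=\frac{\sigma^2}{8}\mathcal{I}$, drop the cross term $-\frac{\sigma}{2}\langle v_t,\otgrad\mathcal{S}(\rho_t)\rangle$ as a total time derivative (hence a boundary constant), and rewrite $\frac{\sigma}{2}\langle\otgrad\mathcal{S}(\rho_t),\mathscr{B}(\rho_t)\rangle$ as $\frac{1}{2}\mathcal{E}_{\sigma\nabla\cdot b}(\rho_t)$ via the projection identity and integration by parts. Your explicit use of $\langle\xi,\proj_\rho(b)\rangle_{\tsp_\rho}=\langle\xi,b\rangle_{L^2(\rho)}$ and the term-by-term factor-of-two bookkeeping are exactly the points the paper leaves implicit.
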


\begin{proof}
It suffices to expand the squares, use the defintion \eqref{eq:fish}, observe that the product $-\sigma\int_{0}^1 \langle \otgrad \mathcal{S}(\rho_t) ,v_t\rangle_{\tsp_{\rho_t}}\De t$ is constantly equal to $\mathcal{S}(\mu_0)-\mathcal{S}(\mu_1)$ over the admissible region and can thus be neglected, and  to rewrite the cross product  $\sigma\int_0^1 \langle \otgrad \mathcal{S}(\rho_t),\mathscr{B}(\rho_t) \rangle_{\tsp_{\rho_t}} \De t$ in the following way:
\begin{eqnarray*}
&{}& \sigma\int_0^1 \langle \otgrad \mathcal{S}(\rho_t),\mathscr{B}(\rho_t) \rangle_{\tsp_{\rho_t}} \De t\\
	&=&-\sigma\int_0^1 \int_{\R^d} \langle \nabla \log \rho_t(x), b(x) \rangle	\rho_t(x) \De x \De t\\
&=& - \sigma \int_{0}^{1} \int_{\R^d} \langle \nabla \rho_t, b \rangle (x) \De x \De t \\
&=& \sigma \int_{0}^{1}\int_{\R^d} (\nabla \cdot b)\rho_t(x) \De x \De t \\
&\stackrel{\eqref{eq:energy}}{=}& \int_{0}^{1}\mathcal{E}_{\sigma  \nabla \cdot b}(\rho_t) \De t.
\end{eqnarray*}

\end{proof}

\section{Variational analysis}\label{VA}
The goal of this section is to prove the following result.
\begin{theorem}\label{thm:2ndordereqSB}
	Let $(\mu_t, 0\le t\le 1)$ be the entropic interpolation between $\mu$ and $\nu$ and $(v_t)$ its velocity field. Then $(\mu_t)$ satisfies the equation
	\begin{equation}\label{eq:2ndordereqSB}
		\covdev_{v_t} v_t =  \otgrad \left[\frac{\sigma}{8}\mathcal{I}+ \frac{1}{2}|\mathscr{B}|^2_{\tsp_{\cdot}}  + \frac{1}{2}\mathcal{E}_{\sigma \nabla \cdot b}\right] (\mu_t) +  2\,\visc_{v_t} \mathscr{B}
	\end{equation}
	
\end{theorem}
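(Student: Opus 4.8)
The plan is to obtain \eqref{eq:2ndordereqSB} as the Euler--Lagrange equation, in the formal Riemannian geometry of $\cP_2(\R^d)$ set up in Section \ref{SOC}, of the action in Lemma \ref{lem:2ndFDformulation}. Write the integrand of \eqref{FDproblem1gio2} as $L(\rho_t,v_t)$ with
\[
L(\rho,v)\;=\;\tfrac12\,|v|^2_{\tsp_\rho}\;-\;\langle\mathscr{B}(\rho),v\rangle_{\tsp_\rho}\;+\;\mathcal{G}(\rho),\qquad
\mathcal{G}(\rho):=\tfrac{\sigma}{8}\mathcal{I}(\rho)+\tfrac12|\mathscr{B}(\rho)|^2_{\tsp_{\cdot}}+\tfrac12\mathcal{E}_{\sigma\nabla\cdot b}(\rho),
\]
so that $\mathcal{G}$ collects exactly the purely $\rho$-dependent terms and coincides with the bracket on the right-hand side of \eqref{eq:2ndordereqSB}; the cross term $-\tfrac{\sigma}{2}\langle\otgrad\mathcal{S}(\rho),v\rangle$ that also appears when one expands the square in \eqref{FDproblem1gio} has been dropped because, exactly as in the proof of Lemma \ref{lem:2ndFDformulation}, it integrates to the constant $\mathcal{S}(\mu)-\mathcal{S}(\nu)$ on the admissible set. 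By Lemma \ref{lem:2ndFDformulation} and Theorem \ref{FDR} the entropic interpolation $(\mu_t)$ minimizes $\int_0^1 L(\rho_t,v_t)\,\De t$ among curves $(\rho_t)$ joining $\mu$ to $\nu$, hence it is a critical point and its first variation vanishes.

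To compute that first variation I would take a smooth two-parameter admissible family $(\rho^s_t)$ with $\rho^0_\cdot=(\mu_t)$, fixed endpoints $\rho^s_0=\mu$, $\rho^s_1=\nu$, and velocity field $v^s_t\in\tsp_{\rho^s_t}$ (the unique gradient field solving \eqref{FDproblem2gio2} along $t\mapsto\rho^s_t$). Let $\xi_t\in\tsp_{\mu_t}$ be the velocity field at $s=0$ of the transversal curve $s\mapsto\rho^s_t$, so $\xi_0=\xi_1=0$. The key structural input is the symmetry of the mixed covariant derivatives of the family, $\covdev_{\xi}v=\covdev_{v}\xi$ at $s=0$ (torsion-freeness, built into \eqref{eq:covdedv}), together with metric-compatibility of $\covdev$. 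For the kinetic term, $\frac{d}{ds}\big|_0\tfrac12|v^s_t|^2_{\tsp_{\rho^s_t}}=\langle\covdev_{\xi_t}v_t,v_t\rangle_{\tsp_{\mu_t}}=\langle\covdev_{v_t}\xi_t,v_t\rangle_{\tsp_{\mu_t}}$, and since $\frac{d}{dt}\langle\xi_t,v_t\rangle_{\tsp_{\mu_t}}=\langle\covdev_{v_t}\xi_t,v_t\rangle+\langle\xi_t,\covdev_{v_t}v_t\rangle$ while $\xi$ vanishes at $t=0,1$, this contributes $-\int_0^1\langle\covdev_{v_t}v_t,\xi_t\rangle_{\tsp_{\mu_t}}\,\De t$. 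For the potential term, the very definition \eqref{eq:otgrad} of the Wasserstein gradient gives $\frac{d}{ds}\big|_0\int_0^1\mathcal{G}(\rho^s_t)\,\De t=\int_0^1\langle\otgrad\mathcal{G}(\mu_t),\xi_t\rangle_{\tsp_{\mu_t}}\,\De t$, each of $\mathcal{I}$, $|\mathscr{B}|^2_{\tsp_{\cdot}}$, $\mathcal{E}_{\sigma\nabla\cdot b}$ being among the functionals whose gradients were recorded in Section \ref{SOC}.

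For the coupling term, writing $\covdev_{\xi_t}\big(\mathscr{B}(\rho_t)\big)=\otJac_{\xi_t}\mathscr{B}$ and $\covdev_{v_t}\big(\mathscr{B}(\rho_t)\big)=\otJac_{v_t}\mathscr{B}$, the product rule for $\langle\cdot,\cdot\rangle_{\tsp}$, torsion-freeness, and one integration by parts in $t$ (again killing the boundary terms via $\xi_0=\xi_1=0$) give
\[
-\frac{d}{ds}\Big|_{0}\int_0^1\langle\mathscr{B}(\rho^s_t),v^s_t\rangle_{\tsp_{\rho^s_t}}\De t
=\int_0^1\Big(\langle\otJac_{v_t}\mathscr{B},\xi_t\rangle_{\tsp_{\mu_t}}-\langle\otJac_{\xi_t}\mathscr{B},v_t\rangle_{\tsp_{\mu_t}}\Big)\De t
=2\int_0^1\langle\visc_{v_t}\mathscr{B},\xi_t\rangle_{\tsp_{\mu_t}}\De t,
\]
the last equality being precisely the definition \eqref{eq:visc} of the antisymmetric Jacobian. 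Adding the three contributions and invoking the fundamental lemma of calculus of variations — $\xi_t$ ranges over arbitrary gradient vector fields vanishing at $t=0,1$ — forces $-\covdev_{v_t}v_t+\otgrad\mathcal{G}(\mu_t)+2\,\visc_{v_t}\mathscr{B}=0$, which is \eqref{eq:2ndordereqSB}.

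I expect the coupling term $-\langle\mathscr{B}(\rho),v\rangle$ to be the main obstacle: one must keep careful track of the fact that $\mathscr{B}$ is a vector field on $\cP_2(\R^d)$ (both its value and the relevant inner product move with $\rho$) and that $v$ is itself varying, and one must justify the commutation $\covdev_{\partial_s}\partial_t=\covdev_{\partial_t}\partial_s$ for the two-parameter family, i.e. that the connection \eqref{eq:covdedv} is torsion-free. Recognizing that the combination which survives the integration by parts is exactly twice the antisymmetric Jacobian $\visc_{v}\mathscr{B}$ is what produces the extra ``viscous'' Lorentz-type force, absent in \cite{vonR,GC} for reversible priors. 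As throughout the paper, all of this is carried out formally within Otto's calculus; making each step rigorous would need the apparatus of \cite{AG,Gigli}, and for SBP a rigorous version is in \cite{GC}.
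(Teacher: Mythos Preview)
Your proposal is correct and follows the same overall strategy as the paper: both compute the first variation of the action in Lemma~\ref{lem:2ndFDformulation}, split it into kinetic, potential, and coupling pieces, and conclude via the fundamental lemma of the calculus of variations. The decomposition into the three terms, the identification of the potential piece with $\otgrad\mathcal{G}$, and the recognition that the coupling term produces exactly $2\,\visc_{v_t}\mathscr{B}$ are all the same.

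The execution, however, is genuinely different in flavour. The paper carries out each computation in coordinates on $\R^d$: for the kinetic term it writes $v^s_t=\nabla\varphi^s_t$, differentiates the continuity equation in $s$ to obtain the identity $\partial_s\partial_t\rho^s_t+\nabla\cdot(v^s_t\,\partial_s\rho^s_t+\partial_sv^s_t\,\rho^s_t)=0$, and performs several integrations by parts in $x$ and $t$ until the combination $\partial_tv_t+\tfrac12\nabla|v_t|^2$ emerges and is recognised as $\covdev_{v_t}v_t$ via \eqref{eq:acc}; the coupling term is treated analogously. You instead invoke the abstract Riemannian properties of the Otto connection directly---metric compatibility to differentiate the norms and inner products, and torsion-freeness to swap $\covdev_{\xi}v$ for $\covdev_{v}\xi$---and then a single integration by parts in $t$ suffices. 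Your route is shorter and arguably closer to the ``intrinsic geometric'' spirit the paper advertises; the paper's route has the advantage of not relying on those abstract properties as black boxes, effectively re-deriving them in this particular setting, which makes the formal calculation more self-contained.
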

In the reversible case, i.e. when $b = -\nabla U$, we obtain the following

\begin{cor}\label{cor:2ndordereqSBrev}
Let $b= -\nabla U$, $(\mu_t)$ be the entropic interpolation between $\mu$ and $\nu$ and $(v_t)$ its velocity field. Then $(\mu_t)$ satisfies the equation
\begin{equation}\label{eq:reversibleSB}
\covdev_{v_t} v_t =  \otgrad \left[\frac{\sigma}{8}\mathcal{I}  + \frac{1}{2}\mathcal{E}_{|\nabla U|^2-\sigma \Delta U }\right] (\mu_t) 
\end{equation}
\end{cor}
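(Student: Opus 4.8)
The plan is to deduce the corollary from Theorem~\ref{thm:2ndordereqSB} by specializing $b=-\nabla U$: in the reversible case two simplifications occur in \eqref{eq:2ndordereqSB}, namely the viscous term drops out and the two potential-type contributions merge into one. The starting observation is that a gradient field $b=-\nabla U$ already belongs to $\tsp_\rho$ for every $\rho$ (it is, at least formally, an $L^2(\rho)$-limit of gradients of smooth functions), so the projection in \eqref{eq:calB} acts as the identity on it and $\mathscr{B}(\rho)=-\nabla U$ for \emph{every} $\rho$; in particular $\mathscr{B}$, viewed as a vector field on $\R^d$, is independent of the base point.

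Next I would show that $\visc_{v_t}\mathscr{B}=0$. Since $t\mapsto\mathscr{B}(\mu_t)=-\nabla U$ is constant, \eqref{eq:covdedv} gives
\[
\otJac_v\mathscr{B}=\covdev_{v}\mathscr{B}(\mu)=\mathrm{P}_\mu\big(\mathrm{D}_v(-\nabla U)\big)=\mathrm{P}_\mu\big(-(\mathrm{Hess}\,U)\,v\big),
\]
so for $w\in\tsp_\mu$ one has $\langle\otJac_v\mathscr{B},w\rangle_{\tsp_\mu}=-\int_{\R^d}\langle(\mathrm{Hess}\,U)(x)v(x),w(x)\rangle\,\mu(x)\,\De x$, which is symmetric in the pair $(v,w)$ because the Hessian is a symmetric matrix. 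Inserting this into the definition \eqref{eq:visc} of the antisymmetric part forces $\visc_{v_t}\mathscr{B}=0$.

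Finally I would rewrite the gradient bracket. From \eqref{eq:energy} and $\mathscr{B}(\rho)=-\nabla U$ we get $|\mathscr{B}(\rho)|^2_{\tsp_\rho}=\int_{\R^d}|\nabla U|^2\rho\,\De x=\mathcal{E}_{|\nabla U|^2}(\rho)$, while $\mathcal{E}_{\sigma\nabla\cdot b}=\mathcal{E}_{-\sigma\Delta U}$ since $\nabla\cdot b=-\Delta U$; because $\mathcal{E}_\bullet$ is linear in the potential, $\tfrac12|\mathscr{B}|^2_{\tsp_\cdot}+\tfrac12\mathcal{E}_{\sigma\nabla\cdot b}=\tfrac12\mathcal{E}_{|\nabla U|^2-\sigma\Delta U}$. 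Substituting this identity together with $\visc_{v_t}\mathscr{B}=0$ into \eqref{eq:2ndordereqSB} yields exactly \eqref{eq:reversibleSB}, which proves the corollary.

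Being a corollary, this argument has no genuine obstacle; the only step demanding (mild) care is the vanishing of the viscous term, which uses \emph{both} that $\mathscr{B}(\mu_t)$ is $t$-independent (so no $\partial_t$ contribution survives in \eqref{eq:covdedv}) and that the spatial Jacobian of a gradient field is a symmetric operator (so its antisymmetric part is zero). A fully rigorous treatment would, in addition, need to justify $\nabla U\in\tsp_{\mu_t}$ along the interpolation, e.g.\ under growth conditions on $U$ ensuring $\nabla U\in L^2(\mu_t)$.
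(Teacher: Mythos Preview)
Your argument is correct and is precisely the specialization the paper has in mind: the corollary is stated immediately after Theorem~\ref{thm:2ndordereqSB} without a separate proof, so the implicit derivation is exactly to plug $b=-\nabla U$ into \eqref{eq:2ndordereqSB}, observe that $\mathscr{B}(\rho)=-\nabla U$ makes the antisymmetric Jacobian vanish (by symmetry of the Hessian), and combine $\tfrac12|\mathscr{B}|^2_{\tsp_\cdot}+\tfrac12\mathcal{E}_{\sigma\nabla\cdot b}$ into $\tfrac12\mathcal{E}_{|\nabla U|^2-\sigma\Delta U}$. Your write-up makes these steps explicit and adds nothing beyond what the paper treats as immediate.
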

Note that, in the above corollary, the quantity $|\nabla U|^2-\sigma \Delta U $ is the \emph{reciprocal characteristic} associated to the potential $U$. Among others Levy, Krener and Thieullen \cite{LeKr,Kr,Th} observed that this quantity should express a kind of mean acceleration for the Schr\"odinger bridge, or, more generally, for reciprocal processes.  Equation \eqref{eq:reversibleSB} clearly shows that this is the case.
Notice that the covariant derivative has also been computed in \cite[p.6]{GT} without relating it to the gradient of the Fisher information functional. Prior to the proof, let us introduce some useful terminology. In analogy to the classical calculus of variations, we call a \emph{variation} of $(\mu_t)$ a family of curves $(\rho^s_t)$ such that 
\begin{itemize}
\item $(\rho^0_{t})=(\mu_t)$
\item $ \forall s\in (-\varepsilon,\varepsilon), \quad (\rho^s_0) = \mu_0$ and $\rho^s_1= \mu_1$.
\end{itemize}
For any fixed $s$, $v^s_t$ stands for the velocity field of the curve $t \mapsto \rho^s_{t}$. That is
\begin{equation}\label{eq:vst}
\partial_t \rho^s_t + \nabla \cdot( v^s_t \rho^s_t) =0,
\end{equation}
and $v^s_t$ is a gradient vector field. In particular, we have that $(v^0_t)=(v_t)$, where $v_t$ is the velocity field of $(\mu_t)$.
On the other hand, for fixed $t$,  $w^s_t$ is the velocity field of the curve $s \mapsto \rho^s_t$, i.e.
\begin{equation}\label{eq:wst}
\partial_s \rho^s_t + \nabla \cdot(w^s_t \rho^s_t) =0,
\end{equation}
and $w^s_t$ is a gradient vector field. 

\begin{proof}
	
Consider a variation $(\rho^s_t)$ of $(\mu_t)$. Imposing the optimality condition of $(\mu_t)$ in \eqref{eq:FDproblemgio2} yields
\begin{equation}\label{eq:criticality1}
\int_0^1 \partial_s \left( \frac{1}{2}|v^s_t|^2_{\tsp_{\rho^s_t}} + \frac{\sigma^2 }{8}\mathcal{I}(\rho^s_t)+  \frac{1}{2}|\mathscr{B}|^2_{\tsp_{\rho^s_t}} + \frac{1}{2} \mathcal{E}_{\sigma \nabla \cdot b}(\rho^s_t) - \frac{1}{2}\langle v^s_t, \mathscr{B}(\rho^s_t) \rangle_{\tsp_{\rho^s_t}} \right)\Big|_{s=0} \De t \leq 0
\end{equation}
For convenience in the analysis, set
\begin{eqnarray*}
A &:=&	\int_0^1 \partial_s \left( \frac{1}{2}|v^s_t|^2_{\tsp_{\rho^s_t}} \right)\Big|_{s=0} \De t \\
B &:=& \int_0^1\partial_s   \left( \frac{\sigma^2 }{8} \mathcal{I}(\rho^s_t) + |\mathscr{B}|^2_{\tsp_{\rho^s_t}}+\frac{1}{2} \mathcal{E}_{\sigma \nabla \cdot b}(\rho^s_t) \right) \Big|_{s=0} \De t\\
C &:=& -\int_{0}^1 \partial_s \left( \langle v^s_t, \mathscr{B}(\rho^s_t) \rangle_{\tsp_{\rho^s_t}} \right)\Big|_{s=0} \De t
\end{eqnarray*}

Let us work separately on each term.
\paragraph*{Term A} Using the product rule we obtain

\begin{equation}\label{eq:firstterm1}
 \frac{1}{2}\int_0^1 \partial_s |v^s_t|^2_{\tsp_{\rho^s_t}}\De t= \underbrace{\int \langle v^s_t,\partial_s v^s_t \rangle \rho^s_t   \De x \De t}_{:=A.1}+ \underbrace{\frac{1}{2}\int |v^s_t|^2  \partial_s \rho^s_t  \De x  \De t}_{:=A.2},
\end{equation}
where use the shorthand notation  $\int$ for $\int_{0}^1 \int_{\R^d}$. We keep doing this throughout the proof. Since $v^s_t$ is a gradient vector field, we have that $v^s_t = \nabla \varphi^s_t$ for some potential $\varphi^s_t$. With a first integration by parts in space we get
\begin{equation}\label{eq:criticality2} A.1=-\int \varphi^s_t \nabla \cdot\left( \partial_s v^s_t  \rho^s_t\right)  \De t \De x \end{equation}Next, we observe that the following identity can be obtained differentiating \eqref{eq:vst} with respect to $s$ and using \eqref{eq:wst}
\begin{equation}\label{eq:dsdt}
\partial_s \partial_t \rho^s_t + \nabla \cdot ( v^s_t\,  \partial_s \rho^s_t+\ \partial_s v^s_t \,\rho^s_t) =0.  
\end{equation}
The identity \eqref{eq:dsdt} allows us to rewrite  \eqref{eq:criticality2} as
\begin{equation}\label{eq:criticality7} A.1= \underbrace{ \int \varphi^s_t \, \partial_t \partial_s \rho^s_t \De x \De t}_{A.1.1} +\underbrace{ \int \varphi^s_t\,  \nabla \cdot\left(  v^s_t  \partial_s\rho^s_t\right)  \De x \De t}_{A.1.2}  
\end{equation} 
Using subsequently the continuity equations \eqref{eq:vst},\eqref{eq:wst} and integration by parts we have
\begin{eqnarray*}
A.1.1 &=& \int \partial_t \varphi^s_t \,  \nabla \cdot \left( w^s_t \rho^s_t \right) \De x \De t  \\
 &=& -\int \langle \partial_t v^s_t ,   w^s_t  \rangle \rho^s_t \De x \De t,
\end{eqnarray*}
where we used the fact that $\partial_s \rho_0^s = \partial_s \rho^s_1 =0$ 
to get rid of the boundary terms when integrating by parts in time. Moreover, integrating by parts in space and using \eqref{eq:wst}
\begin{eqnarray*}
A.1.2 &=& \int | v^s_t|^2  \nabla \cdot \left(w^s_t \rho^s_t \right)  \De x \De t \\
&=& -\int \langle \nabla \left( | v^s_t|^2\right) , w^s_t \rangle \rho^s_t  \De x \De t
\end{eqnarray*}
Plugging the newly obtained expressions for  $A.1.1$ and $A.1.2$ back into \eqref{eq:criticality7} we obtain
\[ A.1 =-\int \langle   \partial_t v^s_t+ \nabla \left( | v^s_t|^2\right) , w^s_t \rangle \rho^s_t \De x \De t \]
Concerning the term $A.2$ in  \eqref{eq:firstterm1}, using  \eqref{eq:wst} and an integration by parts we get
\[A.2 =\frac{1}{2}\int \langle \nabla \left(|v^s_t|^2\right) , w^s_t \rangle \rho^s_t  \De x \De t  \]
so that the whole expression in \eqref{eq:firstterm1} equals
\begin{equation}\label{eq:criticality6}
-\int \langle   \partial_t v^s_t+ \frac{1}{2}\nabla\left( | v^s_t|^2\right) , w^s_t \rangle \rho^s_t \De x  t \end{equation}
Evaluating at $s=0$, and using \eqref{eq:acc}, we obtain
\begin{equation}\label{eq:firstterm2} 
A=-\int \langle\covdev_{v_t} v_t, w^0_t \rangle_{\tsp_{\mu_t}} \De t
\end{equation}
\paragraph*{Term B}
For this term, we have, using \eqref{eq:otgrad}:

\begin{equation}\label{eq:crititcality3}
B= \int_0^1 \Big\langle\otgrad  \left[ \frac{\sigma^2 }{8} \mathcal{I} +  \frac{1}{2}|\mathscr{B}|^2_{\tsp_{\cdot}}+\mathcal{E}_{\sigma \nabla \cdot b} \right](\mu_t) , w^0_t \Big\rangle_{\tsp_{\mu_t}} \De t.
\end{equation}

\paragraph*{Term C}
First observe that we can rewrite it as
\begin{equation}\label{eq:fourthterm1}   -\int \langle \partial_s v^s_t|_{s=0}, \mathscr{B}(\mu_t)\rangle \mu_t + \langle  v_t,  \partial_s \mathscr{B}(\rho^s_t)|_{s=0}\rangle \mu_t + \langle  v_t,   \mathscr{B}(\mu_t)\rangle \partial_s\rho^s_t|_{s=0} \De x \De t  
\end{equation}

Let us focus on the first summand. Since $\mathscr{B}(\mu_t)$ is a gradient vector field, we may write it in the form $\nabla U_t$ for some potential $U$. Thus we have, using integration by parts, 
\begin{eqnarray*}
&{}&-\int \langle \partial_s v^s_t|_{s=0}, \mathscr{B}(\mu_t)\rangle \mu_t  \De x \De t\\
&=&-\int \langle \partial_s v^s_t|_{s=0}, \nabla U_t \rangle \mu_t \De x  \De t\\
&=&\int  \nabla \cdot(\partial_s v^s_t|_{s=0}\mu_t ) U_t \De x \De t . \\
\end{eqnarray*}
Taking advantage of \eqref{eq:dsdt}, we can write the last integral as
\begin{eqnarray*}
&{}&-\int  \partial_s \partial_t \rho^s_t|_{s=0} \, U_t \, \De x \De t  -\int  \nabla \cdot(\partial_s \rho^s_t|_{s=0} \, v_t) U_t \,  \De x  \De t\\
&\stackrel{\eqref{eq:wst}}{=}& \int   \partial_t (\nabla \cdot ( \mu_t w^0_t) ) U_t  \, \De x \De t + \int  \langle v_t, \mathscr{B}(\mu_t)  \rangle\,   \partial_s \rho^s_t|_{s=0}\,  \De x \De t\\
&=&\int   \langle w^0_t, \partial_t \mathscr{B}(\mu_t)\rangle \mu_t \De x \De t+\int  \langle v_t, \mathscr{B}(\mu_t)  \rangle   \partial_s \rho^s_t|_{s=0} \De x\De t  
\end{eqnarray*}
Note now that the second term of this last expression cancels with the third term  in \eqref{eq:fourthterm1}. Therefore the whole expression in \eqref{eq:fourthterm1} equals
\begin{equation}\label{eq:criticality4}\int \left(  \langle w^0_t, \partial_t \mathscr{B}(\mu_t)\rangle - \langle  v_t,  \partial_s \mathscr{B}(\rho^s_t)|_{s=0}\rangle   \right) \mu_t  \De x \De t. 
 \end{equation}

Next, since $\mathscr{B}(\rho^s_t)$ is a gradient vector field we have
\[\int \langle w^0_t, \mathrm{D}_{v_t}\mathscr{B}(\mu_t)  \rangle \mu_t  \De x \De t = \int \langle  v_t , \mathrm{D}_{w^0_t}\mathscr{B}(\mu_t) \rangle \mu_t  \De x \De t \] 
and therefore we can rewrite \eqref{eq:criticality4} as
\begin{eqnarray*} 
&{}&\int   \left( \Big\langle w^0_t, \partial_t \mathscr{B}(\mu_t)+\mathrm{D}_{v_t} \mathscr{B}(\mu_t)\Big\rangle - \Big\langle  v_t,  \left(\partial_s \mathscr{B}(\rho^s_t)+\mathrm{D}_{w^s_t}\mathscr{B}(\rho^s_t) \right)\big|_{s=0}\Big\rangle  \right) \mu_t  \De x \De t\\
&=&\int   \left( \Big\langle w^0_t, \proj_{\mu_t} \left(\partial_t \mathscr{B}(\mu_t)+\mathrm{D}_{v_t} \mathscr{B}(\mu_t) \right) \Big\rangle - \Big\langle  v_t,  \proj_{\mu_t}\left(\partial_s \mathscr{B}(\rho^s_t)+\mathrm{D}_{w^0_t}\mathscr{B}(\mu_t) \right)\Big\rangle  \right) \mu_t  \De x \De t\\
&\stackrel{\eqref{eq:covdedv}}{=}& \int \langle w^0_t ,\otJac_{v_t} \mathscr{B} \rangle_{\tsp_{\mu_t}} - \langle v_t ,\otJac_{w^0_t} \mathscr{B} \rangle_{\tsp_{\mu_t}} \De t \\
&\stackrel{\eqref{eq:visc}}{=}& 2\int \langle w^0_t ,  \, \visc_{v_t} \mathscr{B} \rangle_{\tsp_{\mu_t}  }\De t.
\end{eqnarray*}
Thus 
\begin{equation}\label{eq:criticality5}
C=2 \int \langle w^0_t ,  \, \visc_{v_t} \mathscr{B} \rangle_{\tsp_{\mu_t}  }\De t
\end{equation}
\newline
Putting togehter \eqref{eq:criticality6},\eqref{eq:crititcality3},\eqref{eq:criticality5}
we can rewrite the criticality condition \eqref{eq:criticality1} as
\begin{equation*}
\int_0^1 \Big\langle -\covdev_{v_t} v_t+ \otgrad \left[ \frac{\sigma^2 }{8} \mathcal{I} +  \frac{1}{2}|\mathscr{B}|^2_{\tsp_{\cdot}} +\frac{1}{2} \mathcal{E}_{\sigma \nabla \cdot b} \right](\mu_t)+ 2 \,\visc_{v_t} \mathscr{B} , w^0_t \Big\rangle_{\tsp_{\mu_t}} \De t \leq 0
\end{equation*}
The desired conclusion now follows from the fact that $w^0_t$ can be chosen arbitrarily among the gradient vector fields.
\end{proof}
}
\section{Madelung fluid and stochastic mechanics}\label{MFSM}

\subsection{The Madelung Fluid}
At the very dawn of quantum mechanics, in 1926, Erwin Madelung proposed in \cite{M} a fluid dynamic formulation of quantum mechanics, see \cite{G} for a beautiful account. He noticed that if $\{\psi(x,t), t_0\le t\le t_1\}$ satisfies the Schr\"{o}dinger equation 
\begin{equation}\label{SE}
\frac{\partial{\psi}}{\partial{t}} =
\frac{i\hbar}{2m}\Delta\psi -
\frac{i}{\hbar}V(x)\psi,
\end{equation}
then, writing $\psi(x,t)=\exp [R(x,t)+\frac{i}{\hbar}S(x,t)]$, $\rho_t(x)=|\psi(x,t)|^2=\exp[2R(x,t)]$, we get for $R$ and $S$  the system of partial differential equations
\begin{subequations}\label{eq:sa}
\begin{eqnarray}\label{sa1}
&&\frac{\partial R}{\partial{t}}+\frac{1}{m}\nabla R\cdot\nabla S+
\frac{1}{2m}\Delta S=0,\\\label{sa2}
&&\frac{\partial S}{\partial t}+\frac{1}{2m}\nabla S\cdot\nabla S+ V
-\frac{\hbar^2}{2m}\left[\nabla R\cdot\nabla R+\Delta R\right]=0.
\end{eqnarray}
\end{subequations}
This system represents the {\em Madelung fluid}. 
{If we set $ \mu_t(x) = \exp(2 R(x,t))$ and $v_t = \frac{1}{m}\nabla S_t$, we can rewrite the system above, after taking the spatial gradient in the second equation and performing some standard calculations, in the following way
\begin{subequations}\label{eq:madelungfluid2nd}
\begin{eqnarray}\label{madelungfluid2nd1}
&&\partial_t \mu_t +\nabla \cdot( v_t \rho_t )=0,\\\label{madelungfluidsecond2}
&&\partial_t v_t+\frac{1}{2}\nabla|v_t|^2 +\frac{1}{m}\nabla V
-\frac{\hbar^2}{8m^2}\nabla \left[|\nabla \log \mu_t|^2 +2 \Delta \log \mu_t \right]=0.
\end{eqnarray}
\end{subequations}
Using \eqref{eq:acc},\eqref{eq:fishgrad},\eqref{eq:energygrad} we obtain that the PDE system above has the clear interpretation of a Newton's law, when $(\mu_t)$ is seen as a curve in the Riemannian manifold of optimal transport:
\[     \covdev_{v_t} v_t = -\frac{\hbar^2}{8m^2} \otgrad \mathcal{I}(\mu_t)-\frac{1}{m} \otgrad \mathcal{E}_V(\mu_t)     \]
This fact was first understood by Von Renesse \cite{vonR} after some early efforts to connect OMT to stochastic mechanics \cite{Carlen}, \cite[p.707]{Vil2}. Note that this equation has the same form of \eqref{eq:reversibleSB}, with the only (remarkable) difference of the minus sign in front of the gradient of the Fisher information.
}
This variational fluid-dynamic formulation fully clarifies the relation between Shr\"{o}dinger Bridges and Nelson's stochastic mechanics (see next subsection) which has  puzzled mathematical physicists since Schr\"{o}dinger in \cite{S1}\footnote{ Schr\"{o}dinger writes: ``{\em  Merkw\"{u}rdige Analogien zur Quantenmechanik, die mir sehr des Hindenkens wert erscheinen}" (remarkable analogies to quantum mechanics which appear to me very worth of reflection).}, see also \cite{Z}. Notice that the Lagrangian with the plus sign also plays a role in Nelson's stochastic mechanics: It appears in the particle (first form) of Hamilton's-like principle due to Yasue \cite{Y,N2}) which relies on stochastic calculus. The Lagrangian with the minus sign in front of the Fisher information (Guerra-Morato Lagrangian) appears instead in the second, fluid dynamic formulation \cite{GM,N2}. This has contributed in the past  to the confusion on the relation between the two theories which we hope this paper has definitely resolved. Historically, equation  (\ref{sa1})   was recognized already by Madelung in 1926 to be just a continuity equation. The second was thought to be a Hamilton-Jacobi equation featuring, besides the classical potential $V$, the ``quantum potential" \cite{Bo}\footnote{It is truly unfortunate that Bohm, who had related such quantum potential to the gradient of the Fisher information functional \cite[p.172]{Bo}, did not realized that this made postulating such a non physical potential unnecessary. This has, among other things, led to an endless controversy between Bohmian and Nelsonian  followers.}
\[-\frac{\hbar^2}{2m}\left[\nabla R\cdot\nabla R+\Delta R\right]=-\frac{\hbar^2}{2m}\frac{\Delta
\sqrt{\rho}(x,t)}{\sqrt{\rho_t(x)}}.
\]
The presence of such a ``nonlocal" potential, however, mystified researchers for a long time. \subsection{Remarks on stochastic mechanics}
The first formulation of quantum mechanics through stochastic processes is due to the Hungarian theoretical physicist Imre Fenyes in a series of papers between 1946 and 1952. In the abstract of \cite{F},  we find the following impressive statement:``Die wellenmechanischen
Prozesse sind spezielle MARKOFFsche Prozesse. Die Relation von Heisenberg  ist (im Gegensatz zur bisherigen Interpretation) ausschliesslich eine Folge der statistischen Betrachtungsweise, und ist von den bei den Messungen auftretenden St\"{o}rungen unabh\"{a}ngig.\footnote{``The wave mechanical
Processes are special MARKOFF processes. The relation of Heisenberg (in contrast to the previous interpretation) is exclusively a consequence of the statistical approach, and is independent of the disturbances occurring during the measurements."}"  After some further contributions by Weizel in the fifties \cite[p.135]{N1}, Edward Nelson provided an elegant formulation of stochastic mechanics in 1966-67 \cite{N0,N1} by developing a new kinematics for continuous stochastic evolutions.  The Schr\"{o}dinger
equation was originally derived from the continuity equation (\ref{sa2}) plus a Newton type
law. The Newton-Nelson law was later shown to follow, in analogy
to classical mechanics, from a Hamilton-like stochastic variational 
principle \cite{Y,GM}.
Other versions of the variational principle have been proposed in
\cite{N2,BCZ,M2,P0,ROS}. One of the most striking differences between 
Nelson's and other
versions of mechanics such as  Bohmian mechanics
\cite{Bo,BV,BH} or the Levy-Krener mechanics \cite{LK,LK2} is that 
the former features a
kinematics for finite-energy diffusions with {\em two} velocities. Indeed, let
$\{\psi(x,t); t_0\le t\le t_1\}$, satisfy (\ref{SE})
be such that
\begin{equation}\label{FA}||\nabla\psi||^2_2\in
L^1_{{\rm loc}}[t_0,+\infty).
\end{equation}
This is Carlen's
{\em finite action condition}. Under these hypotheses, the Nelson measure
$P$
may be constructed on path space, \cite{C},\cite{Car}, \cite [Chapter
IV]{BCZ}, and references therein.
Namely, letting  $\Omega:={\cal C}([t_0,t_1],\R^d)$ the $n$-dimensional
continuous functions on $[t_0,t_1]$, under the  probability measure $P$,
the canonical coordinate process $x(t,\omega)=\omega(t)$ is an
$n$-dimensional, Markov,
finite-energy diffusion process $\{x(t);t_0\le t\le t_1\}$,
called {\em Nelson's process}, having
forward and backward Ito differentials
\begin{subequations}\label{eq:diff}
\begin{eqnarray}\label{diff1}
dx=\left[\frac{\hbar}{m}\nabla\left(
R+ \frac{1}{\hbar}S\right)\right](x(t),t)dt
+\sqrt{\frac{\hbar}{m}}dw_+(t),\\ \label{diff1}
dx=\left[\frac{\hbar}{m}\nabla\left(
-R+ \frac{1}{\hbar}S\right)\right](x(t),t)dt
+\sqrt{\frac{\hbar}{m}}dw_-(t),
\end{eqnarray}
\end{subequations}
where $w_+,w_-$ are standard, $n$-dimensional Wiener process. Moreover, the
probability density $\rho(\cdot,t)$ of
$x(t)$ satisfies Born's relation
\begin{equation}\label{D}\rho_t(x)=|\psi(x,t)|^2,\quad \forall t \in [t_0,t_1],
\end{equation}
and 
\[v_t(x)=\frac{1}{m}\nabla S(x,t), \quad u(x,t)=\frac{\hbar}{m}\nabla R(x,t)\]
are the current and osmotic velocity fields. It should be emphasized, that Nelson's stochastic mechanics, when compared to other alternative formulations, features the following advantages: There is no need to postulate any potential besides the physical potential $V$, but only to adopt what is the natural kinematics for finite energy diffusions \cite{Foe, F2}. Notice that  such a kinematics leads naturally to the complexification of the velocity and of the momentum \cite{P0,P01, KGP}. The latter fact permits to develop Hamilton-like equations and provides a {\em kinematical} justification (in constrast to the countless  {\em mathematical} justifications)  for the need of complex quantities in quantum mechanics. For a list of further ``successes", as well as failures, of Nelson's stochastic mechanics see \cite{N4}. Among the successes, it might be worthwhile quoting from \cite{N4} the little known: ``A stochastic picture of the two-slit experiment, explaining how particles have trajectories
going through just one slit or the other, but nevertheless produce a probability density as for
interfering waves; see \cite[Section 17]{N2}", see also \cite{P3}.

Although both the hydrodynamical formulation and the stochastic formulation are alternative to usual quantum mechanics, they are not equivalent. Indeed, let the pair $(v^*,\rho^*)$ be optimal for Problem \ref{FDproblem} and consider the following random evolution
\[\dot{X}(t)=v^*(X(t),t), \quad X(0)\sim
 \rho_0dx,
\]
where we have assumed that $v^*$ guarantees existence and uniqueness of the initial value problem for $t\in [0,1]$ and any deterministic initial condition. Then the probability density $\rho_t(x)$ of $X(t)$ does satisfies (weakly) (\ref{SBB2}) with the same initial condition and therefore coincides with $\rho^*(x,t)$. This shows that variational principles on Wasserstein space such as those of this paper do not encode the same information as those formulated in terms of stochastic processes. In particular, the information on the diffusion coefficient - smoothness of trajectories is lacking. While this might not be so relevant in some applications such as image interpolation (morphing) where the flow of one time marginals is the object of interest, it is definitely crucial in physics. In stochastic mechanics, for instance, the assumption that the evolution be described by a diffusion process with diffusion coefficient $\frac{\hbar}{m}$ is made at the onset leading to stochastic control problems and corresponding Hamilton-Jacobi-Bellman equations. 

Similar considerations can be made in the case of a general Markovian prior treated in Section \ref{SOC}. Within Nelson's stochastic mechanics, this would correspond to the variational mechanism leading to the so-called collapse of the wavefunction after a position measurement \cite{P1,P2}. The zero-noise limit of a general Schr\"{o}dinger bridge, instead, is given by  an OMT problem with prior formulated and studied in \cite{CGP3}, see also \cite{CGP4}.

\end{document}